\pdfoutput=1
\documentclass[
groupedaddress,
nofootinbib,longbibliography,
twocolumn,
notitlepage,
aps]
{revtex4-2}
\usepackage{amsmath,amsfonts,amssymb,amsthm,amsbsy,mathtools}
\usepackage{graphicx}
\usepackage{dcolumn}
\usepackage{bm}
\usepackage{bbold}
\usepackage{mathtools}
\usepackage{xcolor}
\usepackage{tikz}
\usepackage{tikz-cd}
\usepackage{comment}
\usepackage{subfigure}
\usepackage{comment}
\usepackage{soul}
\usepackage{framed}
\usepackage{mdframed}
\usepackage{csquotes}
\usepackage{appendix}
\usepackage{hyperref}

\newtheorem{theorem}{Theorem}
\newtheorem{corollary}{Corollary}
\newtheorem{lemma}{Lemma}

\usepackage{natbib}
\usepackage[normalem]{ulem}
\usepackage{xcolor}
\usepackage{xspace}
\usepackage{ifthen}

\newcommand{\ket}[1]{{|#1\rangle}}
\newcommand{\bra}[1]{{\langle#1|}}

\newcommand{\Tr}{\operatorname{Tr}}

\newcommand{\showcomments}{true}

\newcommand{\acd}[1]%
{\ifthenelse{\equal{\showcomments}{true}}{{\color{magenta}{#1}}}{\xspace}}%

\begin{document}

\author{Anne-Catherine de la Hamette}
\affiliation{Institute for Theoretical Studies, ETH Zürich, 8006 Zürich, Switzerland}

\date{March 24, 2026}
\title{Observer-Dependent Entropy and Diagonal Rényi Invariants \\ in Quantum Reference Frames}

\begin{abstract}
Quantum reference frames provide a relational description of multipartite quantum systems in which physical states and observables are defined relative to quantum observers. Yet different observers can assign different entropies to the same system, raising the question of how such observer-dependence is constrained. We identify a family of frame-independent diagonal Rényi entropies for arbitrary subsystems, yielding a generalized multipartite coherence–entanglement tradeoff. For ideal frames, the observer-dependence of subsystem entropy admits an exact decomposition into a sum of single-frame coherences and inter-frame correlations; for non-ideal frames, it is instead bounded by the dimension of an effective relational Hilbert space determined by the representation structure of the frames. Our results place quantitative limits on how much quantum observers can disagree about subsystem entropy, with potential implications for observer-dependent entropy assignments in gravitational settings.
\end{abstract}
\maketitle

\section{Introduction}

In physics, properties assigned to physical systems generally depend on the reference frame used to describe them. In quantum theory, this feature becomes particularly subtle, especially when reference frames themselves are treated as quantum systems. Recent developments in quantum reference frames (QRFs) have shown that coherence, entanglement, and even subsystem structure are in general frame-dependent \cite{Giacomini_2017_covariance, Vanrietvelde_2018a,delaHamette_2020,Savi_2020,hoehn2021quantum, delaHamette_2021_perspectiveneutral, Hoehn_2023_subsystems, Kabel_2024, cepollaro2024}. At the interface of quantum information and gravity, such perspective dependence has been suggested to play a role in understanding observer-dependent entropy assignments \cite{deVuyst_2024, De_Vuyst_2025}, for instance in the description of quantum fields and black hole radiation by different observers.

For ideal QRFs -- frames carrying the regular representation of a symmetry group $G$ -- changes of perspective act as permutations in the natural group-label basis. One consequence of this permutation structure is a coherence–entanglement tradeoff between two observers describing the same physical situation \cite{cepollaro2024}. It is known that this tradeoff fails for non-ideal frames, but the breakdown has not yet been quantitatively characterized.

In this work, we considerably extend the previously identified coherence–entanglement tradeoff for ideal QRFs. First, we derive a family of diagonal Rényi invariants for subsystems containing a reference frame, yielding a generalized \emph{multipartite} coherence–entanglement tradeoff. Second, for ideal frames, we show that the observer-dependence of the entropy assigned to a system of interest admits an exact decomposition into contributions from single-frame coherences and multipartite correlations between frames. Third, building on the representation-theoretic analysis of non-ideal frames in the perspective-neutral framework \cite{delaHamette_2021_perspectiveneutral,delaHamette_2021_entanglement}, we derive a state-independent bound on the observer-dependent entropy difference in terms of effective relational Hilbert space dimensions determined by the representation multiplicities of the frames. Together, these results place structural limits on observer-dependent entropy assignments and clarify how frame non-ideality constrains the redistribution of information between observers.

\section{Permutation structure and invariants for ideal frames}

We consider a collection of $N$ quantum reference frames  $R_1,\dots,R_N$ and an additional system $S$. All systems carry unitary representations of a compact group $G$ equipped with normalized Haar measure $dg$.

In this section we restrict to \emph{ideal frames}, meaning that each
frame $R_i$ carries the left-regular representation of $G$
on $\mathcal H_{R_i}=L^2(G)$, with group-label basis $\{\ket{g}\}_{g\in G}$ satisfying $\langle g|g'\rangle=\delta(g,g')$, while the system $S$ carries an arbitrary unitary representation of $G$.

The kinematical Hilbert space is $\mathcal H_{\rm kin}= \mathcal H_{R_1}\otimes \cdots \otimes \mathcal H_{R_N}\otimes \mathcal H_S$, equipped with the diagonal action $U(g):=U_{R_1}(g)\otimes\cdots\otimes U_{R_N}(g)\otimes U_S(g)$. The physical Hilbert space is defined as the invariant subspace
\begin{equation}
\mathcal H_{\rm phys}=\{\ket{\psi}\in\mathcal H_{\rm kin}\;|\; U(g)\ket{\psi}=\ket{\psi}\ \forall g\in G\},
\end{equation}
equivalently obtained via coherent group averaging, with projector $\Pi_{\rm phys}=\int_G dg\, U(g)$. Throughout, we restrict to pure physical states  $\rho_{\rm phys}=\ket{\psi}\!\bra{\psi}$ with $\ket{\psi}\in\mathcal H_{\rm phys}$ as naturally obtained in the perspective-neutral framework by coherent group averaging of pure kinematical states.

For each frame $R_i$, the (pure) \textit{relational state} in the perspective of $R_i$ is obtained via the \textit{reduction map} $\mathcal R_i:\mathcal H_{\rm phys}\to\mathcal H_{SR_{\bar i}|R_i}$, defined by conditioning on the identity element $e\in G$ \cite{delaHamette_2021_perspectiveneutral},
\begin{equation}
\rho^{(R_i)}_{SR_{\bar i}}:=\mathcal R_i(\rho_{\rm phys})=N_i\,{}_{R_i}\langle e|\rho_{\rm phys}|e\rangle_{R_i},
\end{equation}
where $S R_{\bar i}$ denotes the complement of $R_i$, and $N_i$ a normalization constant with $ N_i=d_{R_i}$ when $d_{R_i}=\dim\mathcal H_{R_i}<\infty$ . For any $g\in G$, conditioning on $\ket{g}_{R_i}$ instead yields $U_{SR_{\bar i}}(g)\,\rho^{(R_i)}_{SR_{\bar i}}\,U_{SR_{\bar i}}(g)^\dagger$, so different choices of $g$ are related by the induced group action on $SR_{\bar i}$. 

We next recall the change of perspective between two ideal frames. The QRF transformation from $R_i$ to $R_j$ can be written explicitly in the group-label basis as\footnote{While we formulate the results for compact groups in general, in the special case of finite groups Haar integrals reduce to $\frac{1}{|G|}\sum_{g\in G}$ and probability densities become finite probability vectors on $G$.} 
\begin{equation}
U_{i\to j}=\int_G dg\;{}_{R_i}\ket{g^{-1}}\!\bra{g}_{R_j}
\;\otimes\;U_{S R_{\overline{ij}}}(g^{-1}), \label{eq:qrf-change}
\end{equation}
where $U_{S R_{\overline{ij}}}(g)$ denotes the action of $G$ on all systems excluding frames $R_i$ and $R_j$ \cite{delaHamette_2020}. The relational states are related by
\begin{equation}
    \rho^{(R_j)}_{SR_{\bar j}} = U_{i\to j}\rho^{(R_i)}_{SR_{\bar i}}U_{i\to j}^\dagger.
\end{equation}
One can equally obtain this frame change by composing the reduction maps: $\rho^{(R_j)} =\mathcal{R}_j\mathcal{R}_i^\dagger\rho^{(R_i)}\mathcal{R}_i\mathcal{R}_j^\dagger$.

Before stating the first lemma, let us fix some notation that will be used throughout the paper. 
Let $\Delta$ denote the full dephasing map in the group-label basis on any subsystem of frames and/or the system $S$. 
Diagonal states in this basis correspond to classical probability distributions over the basis labels.

Fix a perspective $R_i$, and choose any subsystem $X\subseteq SR_{\bar i}$ that contains at least one other frame $R_j$. The reduced relational state of $X$ relative to $R_i$ is $\rho^{(R_i)}_X:=\Tr_{\overline X}[\rho^{(R_i)}]$ with $\overline X:=SR_{\bar i}\setminus X$ the complement of $X$. 
After changing perspective to $R_j$, the subsystem that is described relationally is $Y:=(X\cup\{R_i\})\setminus\{R_j\}\subseteq SR_{\bar j}$, obtained from $X$ by exchanging $R_j$ for $R_i$. Its complement is $\overline Y:=SR_{\bar j}\setminus Y$ and the corresponding reduced relational state of $Y$ relative to $R_j$ is $\rho^{(R_j)}_Y:=\Tr_{\overline Y}[\rho^{(R_j)}]=\Tr_{\overline Y}[U_{i\to j}\rho^{(R_i)}U_{i\to j}^\dagger]$. 

It was shown in \cite{cepollaro2024} that the full frame change between two ideal frames acts as a permutation in the group-label basis. Let us now state the following for the case of arbitrarily many frames on reduced subsystems.
\begin{lemma}
\label{lem:perm-strong}
There exists a unitary $P_{X\to Y}:\mathcal H_X\to\mathcal H_Y$ that acts as a permutation in the group-label basis such that
\begin{equation}
\Delta\rho^{(R_j)}_Y=P_{X\to Y}\,\Delta\rho^{(R_i)}_X\,P_{X\to Y}^\dagger. \label{eq:perm-reduced}
\end{equation}
\end{lemma}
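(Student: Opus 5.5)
The plan is to exploit the controlled-unitary structure of the frame change \eqref{eq:qrf-change}: it is a unitary on $\mathcal H_{SR_{\bar i}}$ whose action is \emph{controlled} by the group label of $R_j$. Since $R_j\in X$ by assumption, the control register lies inside the subsystem we keep, and the action on the complement $\overline X=\overline Y$ is a unitary conditioned on that label. Such a unitary cannot change the diagonal statistics of the kept subsystem. Note first that $\overline Y = SR_{\bar j}\setminus Y = SR_{\bar i}\setminus X=\overline X$ as sets of factors, so the traced-out systems are literally the same before and after the frame change.

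First I will rewrite $U_{i\to j}$ in the factorized form
\begin{equation}
U_{i\to j}=\int_G dg\; \big(\ket{g^{-1}}_{R_i}\!\bra{g}_{R_j}\otimes U_{X\setminus R_j}(g^{-1})\big)\otimes U_{\overline X}(g^{-1}),
\end{equation}
where $U_{X\setminus R_j}$ and $U_{\overline X}$ are the group actions on the respective factors. On the factors of $X\setminus R_j$ carrying the regular representation, $U(g^{-1})\ket{h}=\ket{g^{-1}h}$. Hence the $X$-part maps a group-label basis vector $\ket{g,\vec h}_X$ to $\ket{g^{-1},g^{-1}\vec h}_Y$. This defines a bijection $\pi$ of basis labels, and I set $P_{X\to Y}:=\int d\mu\,\ket{\pi(x)}_Y\!\bra{x}_X$.

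This step needs the factors of $X$ to carry a basis permuted by $G$. That holds for all frames. If $S\in X$, I will assume, as the statement implicitly does by applying $\Delta$ to $S$, that $U_S$ permutes the chosen basis of $S$. If it does not, I expect the claim to fail, and I would restrict to $S\notin X$. With this, $U_{i\to j}=\int d\mu(x)\,\ket{\pi(x)}\!\bra{x}_{X\to Y}\otimes U_{\overline X}(g_x^{-1})$, where $g_x$ is the $R_j$-label of $x$.

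The key computation is the diagonal entry of the reduced state in $Y$:
\begin{equation}
\bra{y}\rho^{(R_j)}_Y\ket{y}=\Tr\!\big[(\ket{y}\!\bra{y}_Y\otimes 1_{\overline X})U_{i\to j}\rho^{(R_i)}U_{i\to j}^\dagger\big] =\Tr\!\big[U_{i\to j}^\dagger(\ket{y}\!\bra{y}\otimes 1)U_{i\to j}\,\rho^{(R_i)}\big].
\end{equation}
Using the controlled form, I get $U_{i\to j}^\dagger(\ket{y}\!\bra{y}\otimes1)U_{i\to j}=\ket{\pi^{-1}y}\!\bra{\pi^{-1}y}_X\otimes U_{\overline X}(g)^\dagger U_{\overline X}(g)$. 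The unitaries on $\overline X$ cancel precisely because they are conditioned on a single label of $X$. Therefore $\bra{y}\rho^{(R_j)}_Y\ket{y}=\bra{\pi^{-1}y}\rho^{(R_i)}_X\ket{\pi^{-1}y}$, which is exactly \eqref{eq:perm-reduced}.

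The main obstacle is rigor in the continuous case, where $\ket{g}$ are delta-normalized and "diagonal entries" are densities. I would handle this by phrasing everything via the projection-valued measure $E_X(A)=\int_A d\mu\,\ket{x}\!\bra{x}$ and proving $U_{i\to j}^\dagger(E_Y(\pi A)\otimes 1)U_{i\to j}=E_X(A)\otimes1$. This requires only that $\pi$ preserves the Haar product measure, which follows from left and inversion invariance of $dg$.
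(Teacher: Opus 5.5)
Your proof is correct. It produces the same permutation $P_{X\to Y}\ket{g,\mathbf k}=\ket{g^{-1},g^{-1}\mathbf k}$ as the paper, but by a different mechanism.

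The paper computes matrix elements directly. It treats $U_{i\to j}$ as a relabeling $\ket{g,\mathbf k,\mathbf t}\mapsto\ket{g^{-1},g^{-1}\mathbf k,g^{-1}\mathbf t}$ on \emph{all} factors, including the traced-out set $T=\overline X$. The partial trace then yields a factor $\delta_{g^{-1}\mathbf t,\,g'^{-1}\mathbf t'}$. After dephasing, bijectivity of left multiplication reduces this to $\mathbf t=\mathbf t'$, so both diagonals equal $\sum_{\mathbf t}\rho_{g,\mathbf k,\mathbf t;\,g,\mathbf k,\mathbf t}$ up to relabeling.

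You instead pull the diagonal projector back through $U_{i\to j}$. You use only that the action on $\overline X$ is a unitary controlled by the $R_j$-label inside $X$, so it cancels as $U_{\overline X}(g)^\dagger U_{\overline X}(g)=1$.

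Your route buys some generality. You never need a group-label basis on the discarded systems, so your argument covers tracing out an $S$ with an arbitrary representation, which the main text allows. The paper's step $g^{-1}\mathbf t=g^{-1}\mathbf t'\Rightarrow\mathbf t=\mathbf t'$, by contrast, tacitly assumes $U_S$ permutes a basis of $S$ even when $S\in T$.

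Your caveat about $S\in X$ is also right: if $U_S$ does not permute the chosen basis, the lemma can genuinely fail. For example, take $G=\mathbb Z_2$ with $U_S(1)$ the Hadamard gate; the state $\ket{1}_{R_j}\ket{0}_S$ maps to $\ket{1}_{R_i}\ket{+}_S$, and the dephased entropies differ. The paper leaves this assumption implicit by writing $g^{-1}\mathbf k$ for every kept factor.

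Finally, your formulation via the projection-valued measure and measure-preservation of $\pi$ treats the compact case more carefully than the paper's formal replacement of Kronecker deltas by Dirac distributions. The paper's explicit computation, for its part, displays the common diagonal distribution directly.
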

The detailed proof is given in Supplemental Material~\ref{app:proof-perm-strong}. In the group-label basis, the change of frame from $R_i$ to $R_j$ acts by relabeling basis vectors. Although this does not, in general, imply that the reduced states $\rho_X^{(R_i)}$ and $\rho_Y^{(R_j)}$ are related by a permutation, it does so after full dephasing in the group-label basis. After dephasing, the reduced states are diagonal, and the frame change simply permutes the retained labels. Hence the dephased reduced states are related by the permutation $P_{X\to Y}$.

Lemma~\ref{lem:perm-strong} motivates considering quantities that depend only on the diagonal entries of reduced states in the group-label basis. Applying full dephasing and evaluating the Rényi-$\alpha$ entropy 
\begin{equation}
    S_\alpha(\rho):=\frac{1}{1-\alpha}\log\Tr(\rho^\alpha), \quad \alpha>0,\ \alpha\neq1,
\end{equation}
with the von Neumann entropy recovered in the limit $\alpha\to1$, yields a family of invariants. 
Take the subsystems $X$ and $Y$ as defined above.

\begin{theorem}
\label{thm:F1-strong}
For any $\alpha>0$ for which the entropies are finite,
\begin{equation}
S_\alpha\!\Big(\Delta\rho^{(R_i)}_X\Big)=S_\alpha\!\Big(\Delta\rho^{(R_j)}_Y\Big).
\end{equation}
\end{theorem}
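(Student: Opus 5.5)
The plan is to derive Theorem~\ref{thm:F1-strong} directly from Lemma~\ref{lem:perm-strong}, using the fact that Rényi entropies are unitarily invariant. By Lemma~\ref{lem:perm-strong} there is a unitary $P_{X\to Y}$, acting as a permutation of group-label basis vectors, with $\Delta\rho^{(R_j)}_Y=P_{X\to Y}\,\Delta\rho^{(R_i)}_X\,P_{X\to Y}^\dagger$. The two dephased states are diagonal in their respective group-label bases. Their spectra are the diagonal entries, i.e.\ the probability distributions $p_X$ and $p_Y$ over basis labels. Since $P_{X\to Y}$ only relabels basis vectors, $p_Y=p_X\circ\pi^{-1}$ for a bijection $\pi$ of labels. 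Hence the two states have the same spectrum counted with multiplicity.

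For $\alpha\neq1$ I would compute $\Tr[(\Delta\rho^{(R_j)}_Y)^\alpha]=\Tr[P(\Delta\rho^{(R_i)}_X)^\alpha P^\dagger]=\Tr[(\Delta\rho^{(R_i)}_X)^\alpha]$. This uses the functional calculus identity $f(PAP^\dagger)=Pf(A)P^\dagger$ for unitary $P$ and cyclicity of the trace. Equality of the Rényi entropies follows. For $\alpha=1$ the same argument applied to $f(x)=-x\log x$ gives equality of von Neumann entropies, or one can take the limit $\alpha\to1$ whenever the entropies are finite.

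The main obstacle is the continuous case: compact infinite $G$ with $\mathcal H_{R}=L^2(G)$ and delta-normalized kets. There the ``dephased state'' is a probability density $p_X(\vec g)$ on $G^{k}$ (times the spectrum of the $S$-part), and ``trace'' and ``entropy'' should be read as $\int p^\alpha\,d\vec g$ and differential or Rényi entropies with respect to Haar measure. I would therefore check that the permutation in Lemma~\ref{lem:perm-strong} is implemented by a measure-preserving bijection of $G^{k}$. It is built from maps of the form $g\mapsto g^{-1}$ and $g_k\mapsto g^{-1}g_k$, which preserve Haar measure on compact (unimodular) groups. A change of variables then gives $\int p_Y^\alpha=\int p_X^\alpha$, which is exactly why the statement is restricted to $\alpha$ for which the entropies are finite. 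If $S$ has a non-regular representation, its dephasing basis must be the one the lemma uses, and its part of the label map is then also a bijection, so the same argument applies.
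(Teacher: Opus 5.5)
Your proposal is correct and follows the paper's own argument: Lemma~\ref{lem:perm-strong} relates the two dephased reduced states by a permutation of basis vectors, and unitary invariance of $S_\alpha$ (equivalently, equality of the diagonal distributions up to relabeling) gives the equality. Your treatment of infinite compact $G$, where you read the dephased states as Haar densities and use that $(g,\mathbf k)\mapsto(g^{-1},g^{-1}\mathbf k)$ preserves Haar measure, spells out what the paper only asserts formally; your remark about $S$ rests on the same tacit assumption as the lemma, namely that whenever $S\in X$, each $U_S(g)$ permutes the chosen basis of $S$.
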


\begin{proof}
By Lemma~\ref{lem:perm-strong}, the dephased reduced states $\Delta\rho^{(R_i)}_X$ and $\Delta\rho^{(R_j)}_Y$ are related by a permutation $P_{X \to Y}$. Since $S_\alpha$ is invariant under unitary conjugation, the diagonal Rényi entropies coincide.
\end{proof}

The diagonal Rényi invariant admits a natural decomposition into entanglement entropy and coherence contributions. For any subsystem $X\subseteq SR_{\bar i}$ one has
\begin{equation}
S_\alpha\!\big(\Delta\rho^{(R_i)}_X\big) = S_\alpha\!\big(\rho^{(R_i)}_X\big) + C_\alpha\!\big(\rho^{(R_i)}_X\big),
\end{equation}
where $C_\alpha(\sigma):=S_\alpha(\Delta\sigma)-S_\alpha(\sigma)$ denotes the \textit{Rényi-$\alpha$ relative entropy of coherence} \cite{Baumgratz_2014,Shao_2017,Vershynina_2022}. Since the relational state $\rho^{(R_i)}_{SR_{\bar i}}$ is pure, $S_\alpha(\rho^{(R_i)}_X)=S_\alpha(\rho^{(R_i)}_{\overline X})$, with $\overline X=SR_{\bar i}\setminus X$. The invariant therefore takes the form
\begin{equation}
S_\alpha(\Delta\rho^{(R_i)}_X) =S_\alpha(\rho^{(R_i)}_{\overline X})+C_\alpha(\rho^{(R_i)}_X).
\label{eq:tradeoff-general}
\end{equation}
This expresses the diagonal invariant in terms of the entanglement across the bipartition $X|\overline X$ and coherence of $X$ in the group-label basis.

\begin{figure}
    \centering
    \includegraphics[width=0.7\linewidth]{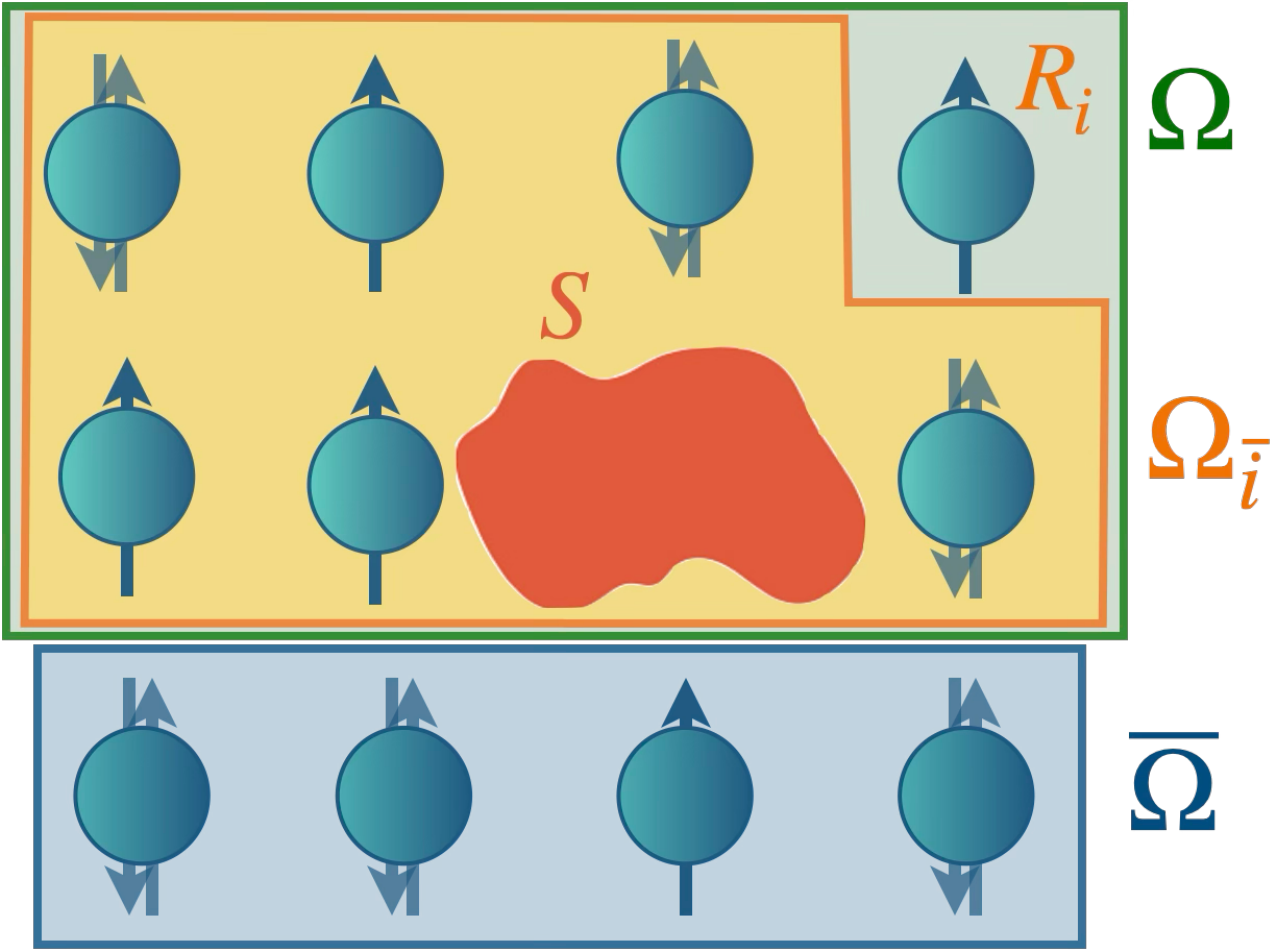}
    \caption{For a collection of quantum reference frames $R_1,\dots,R_N$ (shown here as spins) and a system $S$, choose a subset $\Omega$ containing at least one frame. For any frame $R_i\in\Omega$, let $\Omega_{\bar i}:=\Omega\setminus\{R_i\}$ and $\overline{\Omega}$ denote the complement of $\Omega$. The sum $S_\alpha(\rho^{(R_i)}_{\overline{\Omega}})+C_\alpha(\rho^{(R_i)}_{\Omega_{\bar i}})$ is the same for all choices of $R_i\in\Omega$, expressing agreement among frames in $\Omega$ on this entanglement–coherence invariant.}
    \label{fig:invariant}
\end{figure}

\begin{corollary} \label{cor:multi-tradeoff}
Fix any subset $\Omega\subseteq SR_1\cdots R_N$ containing at least one frame and define $\Omega_{\bar i}:=\Omega\setminus\{R_i\}$ for each $R_i\in\Omega$. Then, for any $\alpha>0$ for which the entropies are finite,
\begin{equation}
S_\alpha\!\big(\rho^{(R_i)}_{\overline{\Omega}}\big)+C_\alpha\!\big(\rho^{(R_i)}_{\Omega_{\bar i}}\big)=\mathrm{const.}\qquad\forall R_i\in\Omega .
\end{equation}
\end{corollary}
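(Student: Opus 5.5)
The plan is to reduce the corollary to Theorem~\ref{thm:F1-strong} combined with the tradeoff identity \eqref{eq:tradeoff-general}, applied to a suitable pair of subsystems. Fix two frames $R_i,R_j\in\Omega$ with $i\neq j$. If $\Omega$ contains only one frame there is nothing to prove, so assume at least two. Take as the subsystem in the perspective of $R_i$ the set $X:=\Omega_{\bar i}=\Omega\setminus\{R_i\}\subseteq SR_{\bar i}$. It contains the frame $R_j$, so it is admissible in Lemma~\ref{lem:perm-strong} and Theorem~\ref{thm:F1-strong}. Exchanging $R_j$ for $R_i$ gives $Y=(X\cup\{R_i\})\setminus\{R_j\}=\Omega\setminus\{R_j\}=\Omega_{\bar j}$.

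The key bookkeeping step is to identify the complements. Relative to $R_i$ we have $\overline X=SR_{\bar i}\setminus\Omega_{\bar i}=\overline\Omega$, because $R_i\in\Omega$ and removing $R_i$ from the full system leaves exactly $\Omega_{\bar i}\cup\overline\Omega$. By the same argument, relative to $R_j$ we have $\overline Y=SR_{\bar j}\setminus\Omega_{\bar j}=\overline\Omega$. So the complement is the same set of subsystems, $\overline\Omega$, in both perspectives. Its state, however, is computed relative to different frames.

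Next, apply \eqref{eq:tradeoff-general}, which relies on purity of the relational states, to both perspectives:
\[
S_\alpha(\Delta\rho^{(R_i)}_{\Omega_{\bar i}})=S_\alpha(\rho^{(R_i)}_{\overline\Omega})+C_\alpha(\rho^{(R_i)}_{\Omega_{\bar i}}),
\]
\[
S_\alpha(\Delta\rho^{(R_j)}_{\Omega_{\bar j}})=S_\alpha(\rho^{(R_j)}_{\overline\Omega})+C_\alpha(\rho^{(R_j)}_{\Omega_{\bar j}}).
\]
Theorem~\ref{thm:F1-strong} says the two left-hand sides are equal. Hence the two right-hand sides are equal for every pair $i,j$ with $R_i,R_j\in\Omega$, and this is exactly the claimed constancy.

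There are two places that need care. The first is finiteness. The decomposition $S_\alpha(\Delta\sigma)=S_\alpha(\sigma)+C_\alpha(\sigma)$ is only meaningful when $S_\alpha(\Delta\sigma)$ is finite, which the hypothesis provides; $S_\alpha(\sigma)\le S_\alpha(\Delta\sigma)$ then keeps both terms finite. In the continuous-group case we may need to interpret the dephased entropies as differential entropies, as Theorem~\ref{thm:F1-strong} implicitly does. The second is the edge case $\overline\Omega=\emptyset$. There the entanglement term vanishes, with the convention that the state of the empty system is trivial, and the statement reduces to equality of coherences. This still follows directly from Theorem~\ref{thm:F1-strong}, since each relational state is pure and so $S_\alpha(\rho^{(R_i)}_{\Omega_{\bar i}})=0$. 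Apart from checking these conventions I expect no real obstacle; the main substantive point is the complement identification above.
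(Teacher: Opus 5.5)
Your proposal is correct and is essentially the paper's argument. You take $X=\Omega_{\bar i}$, which contains $R_j$, so that $Y=\Omega_{\bar j}$ and both complements equal $\overline\Omega$; you then combine Theorem~\ref{thm:F1-strong} with the purity-based decomposition \eqref{eq:tradeoff-general}, and your complement bookkeeping and edge-case remarks just spell out what the paper compresses into the note $\Omega=\{R_i\}\cup X$, $X=\Omega_{\bar i}$.
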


\begin{proof}
This follows directly from Theorem~\ref{thm:F1-strong} together with the decomposition \eqref{eq:tradeoff-general}. Note that $\Omega=\{ R_i \} \cup X$ and $X=\Omega_{\overline i}$.
\end{proof}
Thus, all ideal frames in any chosen subset $\Omega$ agree on the sum of Rényi-$\alpha$ entanglement entropy and Rényi-$\alpha$ relative entropy of coherence. The result is illustrated in Fig.~\ref{fig:invariant}.

Several notable cases follow directly. If $X=R_j$, one obtains $S_\alpha(\Delta\rho^{(R_i)}_{R_j}) = S_\alpha(\Delta\rho^{(R_j)}_{R_i})$, which for $\alpha=1$ reproduces the coherence–entanglement tradeoff of \cite{cepollaro2024}, there derived for two ideal frames. Choosing instead $X=S R_{\bar i}$ yields a global relational invariant, $S_\alpha(\Delta\rho^{(R_i)}_{S R_{\bar i}}) = S_\alpha(\Delta\rho^{(R_j)}_{S R_{\bar j}})$ for \emph{all} ideal frames $R_i,R_j$. In this sense, the diagonal Rényi invariants define quantities on which a whole class of ideal observers agree.

%%%%%%%%%%%%%%%%%%%%%%%%%%%%%%%%%%%%%%%%%

The diagonal invariants admit a direct formulation on the physical Hilbert space via the \emph{trivialization map} $\mathcal T_{i}$ \cite{delaHamette_2021_perspectiveneutral}. This map is an isometry $\mathcal T_i:\mathcal H_{\rm phys}\to \mathcal H_{R_i}\otimes \mathcal H_{SR_{\bar i}}$ acting on a pure physical state $\ket{\psi}\in\mathcal H_{\rm phys}$ as
\begin{equation}
\mathcal T_i\ket{\psi}=\ket{\theta}_{R_i}\otimes\ket{\psi(e)}_{SR_{\bar i}},
\end{equation}
where $\ket{\theta}_{R_i}$ is a fixed disentangled pure state of the frame $R_i$ and $\ket{\psi(e)}_{SR_{\bar i}}$ is the relational state in the perspective of $R_i$. Tracing out the trivialized frame recovers the relational state \cite{delaHamette_2021_perspectiveneutral},
\begin{equation}
\Tr_{R_i}\!\big[\mathcal T_i\,\rho_{\rm phys}\,\mathcal T_i^\dagger\big]=\ket{\psi(e)}\!\bra{\psi(e)}_{SR_{\bar i}}=\rho^{(R_i)}_{SR_{\bar i}}. \label{eq:phys-rel-state}
\end{equation}
For any subsystem $X\subseteq SR_{\bar i}$ and corresponding reduced relational state $\rho^{(R_i)}_X:=\Tr_{\overline{X}}[\rho^{(R_i)}_{SR_{\bar i}}]$, Eq.~\eqref{eq:phys-rel-state} implies
\begin{equation}
S_\alpha\!\big(\Delta\,\rho^{(R_i)}_X\big)=S_\alpha\!\Big(\Delta\,\Tr_{R_i\overline{X}}\!\big[\mathcal T_i\,\rho_{\rm phys}\,\mathcal T_i^\dagger\big]
\Big),
\end{equation}
for all $\alpha>0$ for which the entropies are finite. By Theorem~\ref{thm:F1-strong}, the diagonal Rényi invariants are therefore determined by $\rho_{\rm phys}$ alone and are independent of the choice of ideal frame within $\Omega=\{ R_i \} \cup X$.

%%%%%%%%%%%%%%%%%%%%%%%%%%%%%%%%%%%%%%%%%

The diagonal invariants can equivalently be evaluated from any other ideal frame not contained in the chosen subset  $X$. Let $R_k$ be a frame with $R_k\notin X$, and denote by $U_{k\to i}$ the ideal QRF change from the perspective of $R_k$ to that of $R_i$, as in Eq.~\eqref{eq:qrf-change}. Since the full frame change $U_{k\to i}$ relabels the group-label basis by a permutation, the dephased reduced states obtained after tracing out $\overline X$ are related by the corresponding permutation on $X$. Hence, for every subsystem $X\subseteq SR_{\bar i}$,
\begin{equation}
\!\!\!\!  S_\alpha\!\Big(\Delta\,\Tr_{\overline X}[\rho^{(R_i)}] \Big) = S_\alpha\!\Big( \Delta\,\Tr_{\overline X} \big(U_{k\to i}\rho^{(R_k)}U_{k\to i}^\dagger \big) \Big).
\end{equation}

The invariant from Theorem \ref{thm:F1-strong} thus admits three equivalent realizations: in any of the frames contained in $\Omega$, as a canonical quantity on the physical Hilbert space, or in the perspective of any external ideal frame, followed by the appropriate frame change. For ideal frames these descriptions coincide exactly.

%%%%%%%%%%%%%%%%%%%%%%%%%%%%%%%%%%%%%%%%%
The preceding results single out the diagonal Rényi entropies as natural QRF invariants. However, the permutation structure of Lemma~\ref{lem:perm-strong} is stronger: it implies that the diagonal part of the reduced state in the group-label basis is preserved up to relabeling. Consequently, \textit{any}  \enquote{classical} observable that is diagonal in this basis has identical statistics in all ideal frames that agree on the subsystem under consideration. More precisely, let $A_X$ be any operator diagonal in the group-label basis of a subsystem $X\subseteq SR_{\bar i}$ that contains a frame $R_j$. Then the dephased reduced states $\Delta\rho^{(R_i)}_X$ and $\Delta\rho^{(R_j)}_Y$ are related by a permutation $P_{X\to Y}$, and hence, defining $A_Y:=P_{X\to Y}A_XP_{X\to Y}^\dagger$, all moments coincide: 
\begin{align}
    \Tr\!\big(\rho^{(R_i)}_X A_X^n\big)=\Tr\!\big(\rho^{(R_j)}_Y A_Y^n\big)\quad \forall n\in\mathbb N.
\end{align}

\section{Observer-dependent entropy of $S$}
While the diagonal Rényi entropies above are invariant under ideal frame changes, the entropy assigned to the system $S$ itself generally depends on the chosen frame.  
For two ideal frames $R_i$ and $R_j$, define the entropy difference
\begin{equation}
\Delta S^{(i,j)}(S) :=\big| S(\rho^{(R_i)}_S)-S(\rho^{(R_j)}_S)\big|,
\end{equation}
where $S$  denotes the von Neumann entropy ($\alpha \to 1)$.

Let $C(\sigma):=S(\Delta\sigma)-S(\sigma)$ denote the relative entropy
of coherence for $\alpha \to 1$ in the group-label basis \cite{Baumgratz_2014}. Then the observer-dependent
entropy difference admits the following exact expression.

\begin{theorem}
\label{thm:pairwise-bound}
For every pair of ideal frames $(R_i,R_j)$ for which the entropies are finite,
\begin{align}
&\Delta S^{(i,j)}(S) = |C(\rho^{(R_j)}_{R_{\bar j}})-C(\rho^{(R_i)}_{R_{\bar i}})| \nonumber\\
& =|\sum_{k \neq j} C(\rho_{R_k}^{(R_j)})-\sum_{k \neq i} C(\rho_{R_k}^{(R_i)})+ \Gamma^{(R_j)}-\Gamma^{(R_i)}|.
\end{align}
\end{theorem}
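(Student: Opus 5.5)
The plan is to combine purity of the relational states with the global diagonal invariant of Theorem~\ref{thm:F1-strong}, applied to the subsystem $X=R_{\bar i}$ (all frames other than $R_i$, excluding $S$). Since $\rho^{(R_i)}_{SR_{\bar i}}$ is pure, the entanglement entropy of $S$ equals that of its complement, $S(\rho^{(R_i)}_S)=S(\rho^{(R_i)}_{R_{\bar i}})$, and likewise $S(\rho^{(R_j)}_S)=S(\rho^{(R_j)}_{R_{\bar j}})$. Using the definition of relative entropy of coherence (the $\alpha\to1$ case of the decomposition preceding Eq.~\eqref{eq:tradeoff-general}), we write
\begin{equation}
S(\rho^{(R_i)}_{R_{\bar i}})=S(\Delta\rho^{(R_i)}_{R_{\bar i}})-C(\rho^{(R_i)}_{R_{\bar i}}),
\end{equation}
and similarly for $R_j$.

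Next we apply Theorem~\ref{thm:F1-strong} with $X=R_{\bar i}$, which contains $R_j$. Then $Y=(X\cup\{R_i\})\setminus\{R_j\}=R_{\bar j}$, so $S(\Delta\rho^{(R_i)}_{R_{\bar i}})=S(\Delta\rho^{(R_j)}_{R_{\bar j}})$ in the limit $\alpha\to1$. This limit is harmless when the entropies are finite, since Lemma~\ref{lem:perm-strong} gives a permutation equality of the diagonal distributions, and that directly yields equality of Shannon entropies. Subtracting the two expressions, the diagonal terms cancel and
\begin{equation}
S(\rho^{(R_i)}_S)-S(\rho^{(R_j)}_S)=C(\rho^{(R_j)}_{R_{\bar j}})-C(\rho^{(R_i)}_{R_{\bar i}}).
\end{equation}
Taking absolute values gives the first equality of the theorem.

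For the second equality, $\Gamma^{(R_i)}$ must be identified; it is presumably the multipartite coherence correlation among the frames $R_{\bar i}$. It is natural to take
\begin{equation}
\Gamma^{(R_i)}:=C(\rho^{(R_i)}_{R_{\bar i}})-\sum_{k\neq i}C(\rho^{(R_i)}_{R_k}).
\end{equation}
Since the dephasing $\Delta$ factorizes across subsystems, this equals
\begin{equation}
\Gamma^{(R_i)}=\Big[\sum_{k\neq i}S(\rho^{(R_i)}_{R_k})-S(\rho^{(R_i)}_{R_{\bar i}})\Big]-\Big[\sum_{k\neq i}S(\Delta\rho^{(R_i)}_{R_k})-S(\Delta\rho^{(R_i)}_{R_{\bar i}})\Big],
\end{equation}
that is, the quantum total correlation of the frames minus the classical total correlation of their dephased distribution. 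The step uses $\Tr_{\overline{R_k}}\Delta\rho=\Delta\Tr_{\overline{R_k}}\rho$, so the marginals of the dephased joint state are the dephased marginals. Substituting the expansion for both $R_i$ and $R_j$ into the first equality gives the second line verbatim.

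The main obstacle is purely interpretive: the statement does not define $\Gamma$, so I would state the definition above explicitly, possibly in its total-correlation form. With that definition the second equality is an identity. The only technical care concerns infinite-dimensional $L^2(G)$, where the dephased entropies are differential or Shannon-type and may diverge. This is handled by the hypothesis that the entropies are finite, which allows the cancellations to be performed term by term.
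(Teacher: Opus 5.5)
Your proof is correct and follows the paper's argument: you apply Theorem~\ref{thm:F1-strong} with $X=R_{\bar i}$ (so $Y=R_{\bar j}$), use purity to write $S(\Delta\rho^{(R_i)}_{R_{\bar i}})=S(\rho^{(R_i)}_S)+C(\rho^{(R_i)}_{R_{\bar i}})$, and subtract, then obtain the second line from the identity $C(\sigma)=\sum_{k}C(\sigma_{R_k})+\Gamma$, which follows from $(\Delta\sigma)_{R_k}=\Delta(\sigma_{R_k})$. Your choice of $\Gamma$ is exactly the paper's definition $\Gamma^{(R_j)}:=I(R_{\bar j})_{\rho^{(R_j)}_{R_{\bar j}}}-I(R_{\bar j})_{\Delta\rho^{(R_j)}_{R_{\bar j}}}$, stated just after the theorem, so the ambiguity you worried about does not arise.
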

\noindent Here $\Gamma^{(R_j)}$ quantifies correlations between frames that are destroyed by dephasing. More precisely, writing $I(A_1...A_n):=\sum_j S(A_j)-S(A_1...A_n)$ for the \emph{multipartite total correlation} \cite{Groisman_2005,Modi_2012}, one defines
\begin{align}
\Gamma^{(R_j)}:=I(R_{\bar j})_{\rho^{(R_j)}_{R_{\bar j}}}-I(R_{\bar j})_{\Delta\rho^{(R_j)}_{R_{\bar j}}}.
\end{align}
The proof of Theorem \ref{thm:pairwise-bound} is given in Supplemental Material~\ref{app:proof-pairwise-bound}. 

The observer-dependence of the entropy of $S$ is thus exactly determined by the difference between the total single-frame coherence assigned by two observers and the amount of inter-frame correlation destroyed by dephasing in their respective perspectives. The latter quantifies how much of the coherence is stored locally in the individual frames and how much is encoded non-locally in correlations among them, showing that entropy differences arise from a redistribution of coherence among the reference frames.

Maximizing the entropy difference over all frames yields
\begin{align}
\Delta S_{\mathrm{max}}(S)\;:=&\max_i S(\rho^{(R_i)}_S)-\min_j S(\rho^{(R_j)}_S)\; \\
=& \max_i C(\rho^{(R_i)}_{R_{\bar i}})-\min_j C(\rho^{(R_j)}_{R_{\bar j}}).
\end{align}
Thus the maximal disagreement between observers about the entropy of $S$ occurs between frames assigning the largest and smallest coherence to the remaining frames. The maximal entropy difference across ideal observers is therefore fully determined by how coherence is redistributed among the reference frames under ideal QRF transformations. The proof is given in Supplemental Material~\ref{app:proof-pairwise-bound}.

\section{Non-ideal frames}
The results above rely crucially on the permutation structure of ideal frame changes. In particular, they generalize the coherence-entanglement tradeoff of \cite{cepollaro2024}, which was shown to hold exactly for ideal frames and to be violated for non-ideal frames.
We now investigate how frame non-ideality constrains the observer-dependence of the entropy assigned to $S$, using the representation-theoretic description of non-ideal frames developed in \cite{delaHamette_2021_perspectiveneutral,delaHamette_2021_entanglement} in the perspective-neutral approach. As we show below, for non-ideal frames, the entropy difference is no longer fixed exactly, but is bounded by the size of the relational state space available to the frames, as determined by the representation content of the frames and the system.

A non-ideal frame $R$ carries a general unitary representation $U_R$ of $G$, which decomposes into irreducible representations as
\begin{equation}
\!\! \mathcal H_R\cong\bigoplus_{q\in\widehat G}\mathbb C^{m_q^R}\otimes \mathcal{N}_q ,\quad U_R(g)=\bigoplus_{q \in\widehat G} I_{m_q^R}\otimes D_q(g),
\end{equation}
where $\widehat G$ labels the irreducible unitary representations of $G$, $D_q$ is the irrep of dimension $d_q$ on $\mathcal N_q$, and $m_q^R$ its multiplicity. Ideal frames correspond to the regular representation, $m_q^R=d_q$.

Following \cite{delaHamette_2021_perspectiveneutral,delaHamette_2021_entanglement}, each (complete) non-ideal frame $R$ is equipped with a covariant coherent-state POVM generated from a seed state $|\phi\rangle\in\mathcal H_R$, with orbit $|\phi(g)\rangle:=U_R(g)|\phi\rangle$. For the orbits to give rise to a resolution of the identity, the multiplicities must satisfy $m_q^R\leq d_q$ for every irrep $q$. Hence a non-ideal frame can carry at most as many copies of each irrep as an ideal frame \cite{delaHamette_2021_perspectiveneutral}.
%Conditioning a joint state $\rho_{RS}$ on the POVM element labelled by $g$ yields a relational state $\rho_S^{(R)}(g)$ of the remaining system relative to the frame orientation $g$.

Consider now two (complete) non-ideal frames $R_1$ and $R_2$ of finite Hilbert space dimensions $d_{R_1}$ and $d_{R_2}$, and a system of interest $S$ of finite dimension $d_S$. Conditioning a pure physical state $\ket{\psi}\in\mathcal H_{\rm phys}$ on $R_1$ being in the coherent state $\ket{\phi(g)}$ yields the pure relational state $\rho_{SR_2}^{(R_1)}(g)$. For $g=e$, we simply write $\rho_{SR_2}^{(R_1)}$.

This immediately gives a first bound on $\Delta S_\alpha(S):=\Delta S_\alpha^{(1,2)}(S)$, for any $\alpha>0$ for which the entropy is finite. Since the relational state is pure, we have $S_\alpha(\rho_S^{(R_1)})=S_\alpha(\rho_{R_2}^{(R_1)})$, and similarly for $R_2$, and thus
\begin{align}
   \Delta S_\alpha(S)&= |S_\alpha(\rho_{R_2}^{(R_1)})-S_\alpha(\rho_{R_1}^{(R_2)})|\nonumber\\&\leq \mathrm{max}\{\log d_{R_2},\log d_{R_1}\}\\
   &= \log \max\left\{ \sum_b m_b^{R_2}d_b,\sum_b m_b^{R_1}d_b\right\}.\nonumber
\end{align}
Here we assume that $S$ is at least as large as either frame, $d_S\ge d_{R_1},d_{R_2}$; otherwise one also has the trivial bound $\Delta S_\alpha(S)\le \log d_S$.

For non-ideal frames, we further bound $S_\alpha(\rho_{R_j}^{(R_i)})$ using $S_\alpha(\rho)\le \log \operatorname{rank}\rho$. We therefore define the \textit{effective relational Hilbert space dimension} $d_{\mathrm{eff}}(R_j|R_i)$ as the maximal support dimension of $\rho_{R_j}^{(R_i)}$ over all physical states (see Supplemental Material~\ref{app: proof-non-ideal-bound}). It follows that $\operatorname{rank}\rho_{R_2}^{(R_1)} \le d_{\mathrm{eff}}(R_2|R_1)$ and thus
\begin{equation}
S_\alpha(\rho_{R_2}^{(R_1)}) \le \log d_{\mathrm{eff}}(R_2|R_1),
\end{equation}
and similarly with $R_1$ and $R_2$ interchanged.

\begin{theorem} \label{thm:non-ideal-bound}
For two non-ideal frames $R_1$ and $R_2$, the Rényi-$\alpha$ entropy difference assigned to $S$ satisfies
\begin{align}
    \Delta S_\alpha(S) \leq \log \max \{ d_{\mathrm{eff}}(R_1|R_2),d_{\mathrm{eff}}(R_2|R_1) \} 
\end{align}
where 
\begin{align}
  d_{\mathrm{eff}}(R_i\,|\,R_j)=\sum_a d_a\,\min\!\Bigl(m_a^{R_i},\sum_{b,c} m_b^{R_j}m_c^{S}N_{bc}^{\,\bar a}\Bigr)  
\end{align}
with $i\neq j \in \{1,2\}$.
\end{theorem}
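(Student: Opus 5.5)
The reduction made just before the statement does most of the work. By purity of the relational states,
\begin{equation}
\Delta S_\alpha(S)=\big|S_\alpha(\rho^{(R_1)}_{R_2})-S_\alpha(\rho^{(R_2)}_{R_1})\big|.
\end{equation}
Both terms are nonnegative, so their difference is at most the larger of the two, and each is bounded by $\log\operatorname{rank}$. It therefore suffices to show that, for every physical state, $\operatorname{rank}\rho^{(R_j)}_{R_i}\le d_{\mathrm{eff}}(R_i|R_j)$, with $d_{\mathrm{eff}}$ given by the stated formula. The whole task thus reduces to a representation-theoretic computation of the maximal support of the reduced relational state of one frame as seen from the other.

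\textbf{Step 1 (locate the support inside $\mathcal H_{R_i}$).} Write the physical state as a $G$-invariant vector in $\mathcal H_{R_i}\otimes\mathcal H_{R_j}\otimes\mathcal H_S$. Decompose $\mathcal H_{R_i}=\bigoplus_a\mathbb C^{m_a^{R_i}}\otimes\mathcal N_a$, and likewise decompose $\mathcal H_{R_j}\otimes\mathcal H_S$ into isotypic components. The multiplicity of $\mathcal N_b$ there is $\sum_{b,c}m_b^{R_j}m_c^SN_{bc}^{\,\cdot}$, computed via the Clebsch--Gordan coefficients $N_{bc}^{\,a}$.

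An invariant vector pairs the $a$-sector of $R_i$ only with the $\bar a$-sector of $R_jS$. Schur's lemma then gives the form
\begin{equation}
\ket{\psi}=\sum_a\sum_{\mu,\nu}c^{(a)}_{\mu\nu}\,\ket{a,\mu}\otimes\ket{\bar a,\nu}_{\rm inv},
\end{equation}
where $\mu$ runs over the $m_a^{R_i}$ multiplicity labels of $R_i$, and $\nu$ runs over the $M_{\bar a}:=\sum_{b,c}m_b^{R_j}m_c^SN_{bc}^{\,\bar a}$ multiplicity labels on the other side. Here $\ket{\cdot}_{\rm inv}$ denotes the singlet $\tfrac{1}{\sqrt{d_a}}\sum_k\ket{k}\ket{\bar k}$ on $\mathcal N_a\otimes\mathcal N_{\bar a}$.

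\textbf{Step 2 (effect of conditioning on $R_j$ and tracing $S$).} The relational state $\rho^{(R_j)}_{R_iS}$ is obtained by applying $\bra{\phi}_{R_j}$ (the coherent seed) to $\ket\psi$. This is a linear map acting only on $\mathcal H_{R_j}$, so it cannot enlarge the support on the $R_i$ factor. The $R_i$-support of the resulting vector is therefore contained in the $R_i$-support of $\ket\psi$ itself, i.e. in the range of $\Tr_{R_jS}\ket\psi\!\bra\psi$.

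From the Schmidt-type form above, this range lies in $\bigoplus_a V_a\otimes\mathcal N_a$. Here $V_a\subseteq\mathbb C^{m_a^{R_i}}$ has dimension at most $\operatorname{rank}c^{(a)}\le\min(m_a^{R_i},M_{\bar a})$, and the full $\mathcal N_a$ factor contributes $d_a$. Summing over $a$ gives $\operatorname{rank}\rho^{(R_j)}_{R_i}\le\sum_a d_a\min(m_a^{R_i},M_{\bar a})=d_{\mathrm{eff}}(R_i|R_j)$, which is the claimed formula.

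For the definition in the Supplemental Material (maximal support), I will also indicate that the bound is attained by a state with full-rank $c^{(a)}$. This is not needed for the inequality itself.

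\textbf{Step 3 (conclude).} Combine the rank bounds with $S_\alpha\le\log\operatorname{rank}$ and the elementary fact $|x-y|\le\max(x,y)$ for $x,y\ge0$.

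\textbf{Main obstacle.} I expect Step 1 to be the delicate point, for two reasons. First, the conjugate-label bookkeeping ($a$ versus $\bar a$, and hence $N_{bc}^{\,\bar a}$) must be handled correctly. Second, the paper uses a specific normalization of the reduction map for non-ideal frames; I need to check that conditioning via the coherent-state POVM, rather than via an orthogonal basis, still acts as a local operator on $R_j$ alone. If the reduction map involves an extra projection onto a subspace, that only shrinks the support further, so the bound should survive either way.
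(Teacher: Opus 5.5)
Your proposal is correct and follows the paper's proof. You use the same Schur-lemma pairing of the $a$-sector of one frame with the $\bar a$-sector of the complement, the same per-sector Schmidt-rank bound $d_a\min(m_a^{R_i},M_{\bar a})$, and the same conclusion via $S_\alpha\le\log\operatorname{rank}$ and $|x-y|\le\max\{x,y\}$. Your Step 2 also makes explicit something the paper leaves implicit: conditioning on $\bra{\phi}_{R_j}$ acts only on $R_j$, so the $R_i$-support of the relational state lies inside that of $\Tr_{R_jS}\ket{\psi}\!\bra{\psi}$.

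Drop two side remarks rather than try to prove them; neither is needed, since the inequality does not require attainment and the relational state is defined by conditioning alone. First, full-rank $c^{(a)}$ makes the \emph{physical} reduced state attain $d_{\mathrm{eff}}$, but conditioning can strictly lower the rank of $\rho^{(R_j)}_{R_i}$; for one-dimensional trivial $S$ it is pure. Second, an extra projection acting jointly on $R_iS$ would not in general only shrink the $R_i$-support.
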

\noindent The proof can be found in Supplemental Material \ref{app: proof-non-ideal-bound}.

This bound shows that frame non-ideality limits how much two observers can disagree about the entropy of $S$. The maximal disagreement is set by the effective relational Hilbert space available to the frames, as determined by their representation content and that of the system $S$.

Well-defined QRFs give rise to a resolution of the identity and thus satisfy $m_q^R\le d_q$, with equality for ideal frames, i.e.~for the regular representation \cite{delaHamette_2021_perspectiveneutral}. In the ideal case, the effective dimensions $d_{\mathrm{eff}}(R_i|R_j)$ attain their maximal values (equal to $|G|$ for finite groups). Thus, ideal frames yield the largest possible value of the state-independent bound in Theorem~\ref{thm:non-ideal-bound}. In that regime, however, Theorem~\ref{thm:pairwise-bound} provides an exact characterization of the entropy difference.

For non-ideal frames, the strict inequality $m_q^R<d_q$ in at least one sector reduces the size of the relational state space and hence the effective dimensions $d_{\mathrm{eff}}(R_i|R_j)$, leading to a tighter bound on $\Delta S_\alpha(S)$. A simple example is a non-ideal $U(1)$ clock with a finite energy cutoff, carrying only charge sectors $|n|\le \Lambda$. Relative to the ideal case, such a frame lacks high-charge irreps and therefore supports fewer relational degrees of freedom, reducing $d_{\mathrm{eff}}$ and the maximal entropy difference between observers.

\section{Conclusions}
We have derived a family of diagonal invariants for ideal quantum reference frames that extend the previously known coherence–entanglement tradeoff \cite{cepollaro2024}. We generalized it from the case of two frames and one system to arbitrarily many frames, with arbitrary choices of subsets, and from the von Neumann entropy to Rényi-$\alpha$ entropies. Concretely, all ideal frames in a chosen subset agree on the sum of the coherence of that subsystem and the entanglement entropy of its complement. These invariants arise from a subsystem permutation structure of ideal frame changes and constrain how information can be redistributed between observers. We further showed that the observer-dependent entropy assigned to a system of interest admits an exact decomposition into a sum of single-frame coherences and inter-frame correlations.

For non-ideal frames, exact invariance no longer holds. Instead, we derived a state-independent bound on the observer-dependence of the entropy assigned to $S$ in terms of the effective relational Hilbert space accessible to the frames. This bound depends only on the representation-theoretic multiplicities carried by the frames and the system. Since non-ideal frames carry fewer copies of some irreps than ideal frames, the corresponding reduced relational states are supported on smaller effective relational state spaces. The dimension of this support bounds the entropy of the state and, by purity of the relational state, also the entropy assigned to $S$. The maximal possible disagreement in the entropy assigned to $S$ is therefore reduced.

Reference frames may be interpreted operationally as physical rods, clocks, or general measurement devices carried by observers \cite{Loveridge_2016,Loveridge_2017, Giacomini_2017_covariance, Vanrietvelde_2018a, Vanrietvelde_2018b, delaHamette_2020,Loveridge_2020,Castro_Ruiz_2021, delaHamette_2021_perspectiveneutral,Glowacki_2023,Carette_2023,Wang_2023,Chataignier_2024}. In gravitational settings, entropy assignments to quantum fields in curved spacetime, and related gravitational entropy constructions, can depend on the observer’s description \cite{deVuyst_2024,De_Vuyst_2025}. In several recent approaches, non-ideal quantum reference clocks play a crucial role in rendering such entropy computations finite \cite{Chandrasekaran_2022,Jensen_2023,Fewster_2024, deVuyst_2024,Ahmad_2025,De_Vuyst_2025}. Taking the reference frames described in this work as quantum clocks carried by observers, our results provide kinematic constraints on how much two quantum observers can disagree about the entropy of another system due solely to the quantum properties of their reference frames. In particular, our bound shows that this observer-dependence is limited by the size of the relational state space available to the frames. This can have interesting implications for the observer-dependence of gravitational entropies, including entanglement entropies of Hawking radiation. Our results show that such discrepancies are governed by the coherence and correlations carried by the reference systems and, for non-ideal frames, are further limited by their representation-theoretic properties. This suggests that observer-dependence in gravitational entropy calculations can be traced, at least in part, to the coherence, correlations, and non-ideality of the quantum clocks used to define time.

\vspace{0.5 cm}
\paragraph*{Acknowledgments.}
I thank Andrea Di Biagio and Thomas Galley for comments on an earlier draft of this work, and Joe Renes for insightful discussions on entropies. I further thank Andrea Di Biagio for very valuable and encouraging discussions, and for pointing me towards a possible invariant on the level of the physical state space.
I acknowledge support from the Institute for Theoretical Studies at ETH Zurich through a Junior Research Fellowship.

\nocite{apsrev42Control}
\bibliography{bibliography}
\bibliographystyle{apsrev4-2.bst}

%\newpage
\onecolumngrid
\appendix
\section{Proof of Lemma \ref{lem:perm-strong}} \label{app:proof-perm-strong}

Fix two ideal frames $R_i$ and $R_j$ and divide the remaining systems $SR_{\overline{ij}}$ into the two sets $K$ and $T$, where $K$ denotes the subsystems that are kept and $T$ the subsystems that are later traced out. In the group-label basis, the QRF transformation $U_{i\to j}$ in Eq.~\eqref{eq:qrf-change} acts by a relabeling of basis vectors:
\begin{align}
    U_{i\to j}\ket{g}_{R_j}\ket{\mathbf k}_{K}\ket{\mathbf t}_{T}=\ket{g^{-1}}_{R_i}\ket{g^{-1}\mathbf k}_{K}\ket{g^{-1}\mathbf t}_{T},
\end{align}
where $\mathbf{k}=(k_{h_1},\dots,k_{h_m})$ and $\mathbf{t}=(t_{h'_1},\dots,t_{h'_n})$ denote the basis labels of the remaining subsystems.
 Let
$X:=R_jK,\ Y:=R_iK,$
and define the reduced states
\begin{align}
  \rho_X^{(R_i)}:=\Tr_T \rho^{(R_i)},\qquad
\rho_Y^{(R_j)}:=\Tr_T\!\big(U_{i\to j}\rho^{(R_i)}U_{i\to j}^\dagger\big).  
\end{align}
Let $\Delta$ denote the full dephasing map in the group-label basis. More precisely, one could write $\Delta_X$ and $\Delta_Y$ for the corresponding dephasing maps on $X$ and $Y$, but we simply use $\Delta$ throughout.\\

For notational simplicity, we present the proof for finite groups. For compact groups, the same argument applies formally upon replacing sums by Haar integrals and Kronecker deltas by Dirac distributions.

Write the initial state in the group-label basis of $R_jKT$ as
\begin{align}
    \rho^{(R_i)}=\sum_{g,\mathbf k,\mathbf t}\sum_{g',\mathbf k',\mathbf t'}\rho_{g,\mathbf k,\mathbf t;\,g',\mathbf k',\mathbf t'}\ket{g,\mathbf k,\mathbf t}\!\bra{g',\mathbf k',\mathbf t'}.
\end{align}
Tracing out $T$ gives
\begin{align}
   \rho_X^{(R_i)}=\sum_{g,\mathbf k}\sum_{g',\mathbf k'}\sum_{\mathbf t}\rho_{g,\mathbf k,\mathbf t;\,g',\mathbf k',\mathbf t}\ket{g,\mathbf k}\!\bra{g',\mathbf k'}. 
\end{align}
Applying $\Delta$ removes all off-diagonal terms in the group-label basis, hence
\begin{align} \Delta\!\big(\rho_X^{(R_i)}\big)=\sum_{g,\mathbf k}\sum_{\mathbf t}\rho_{g,\mathbf k,\mathbf t;\,g,\mathbf k,\mathbf t}\ket{g,\mathbf k}\!\bra{g,\mathbf k}. \label{eq: dephased state in X}
\end{align}
Now apply the frame change to the full state:
\begin{align}
    U_{i\to j}\rho^{(R_i)}U_{i\to j}^\dagger=\sum_{g,\mathbf k,\mathbf t}\sum_{g',\mathbf k',\mathbf t'}\rho_{g,\mathbf k,\mathbf t;\,g',\mathbf k',\mathbf t'}\ket{g^{-1},g^{-1}\mathbf k,g^{-1}\mathbf t}\!\bra{g'^{-1},g'^{-1}\mathbf k',g'^{-1}\mathbf t'}.
\end{align}
Tracing out $T$ yields
\begin{align}
    \rho_Y^{(R_j)}=\sum_{g,\mathbf k,\mathbf t}\sum_{g',\mathbf k',\mathbf t'}\rho_{g,\mathbf k,\mathbf t;\,g',\mathbf k',\mathbf t'}\,\delta_{g^{-1}\mathbf t,\;g'^{-1}\mathbf t'}\,\ket{g^{-1},g^{-1}\mathbf k}\!\bra{g'^{-1},g'^{-1}\mathbf k'}.
\end{align}
Dephasing in the group-label basis keeps only the diagonal terms, so necessarily $g=g'$ and $\mathbf k=\mathbf k'$. Since left multiplication by $g^{-1}$ is bijective, the condition
$g^{-1}\mathbf t=g^{-1}\mathbf t'$
then implies $\mathbf t=\mathbf t'$. Therefore
\begin{align}
\Delta\!\big(\rho_Y^{(R_j)}\big)=\sum_{g,\mathbf k}\sum_{\mathbf t}\rho_{g,\mathbf k,\mathbf t;\,g,\mathbf k,\mathbf t}\ket{g^{-1},g^{-1}\mathbf k}\!\bra{g^{-1},g^{-1}\mathbf k}.
\label{eq: dephased state in Y}
\end{align}
Next define $P_{X\to Y}$ on the group-label basis by
\begin{align}
    P_{X\to Y}\ket{g,\mathbf k}:=\ket{g^{-1},g^{-1}\mathbf k}.
\end{align}
This is unitary, because the map $(g,\mathbf k)\mapsto(g^{-1},g^{-1}\mathbf k)$ is a bijection of the group-label basis, hence $P_{X\to Y}$ is a unitary that acts as a permutation in the group-label basis.

Applying $P_{X\to Y}$ to Eq.~\eqref{eq: dephased state in X}, we obtain
\begin{align}
   P_{X\to Y}\,\Delta\!\big(\rho_X^{(R_i)}\big)\,P_{X\to Y}^\dagger=\sum_{g,\mathbf k}\sum_{\mathbf t}\rho_{g,\mathbf k,\mathbf t;\,g,\mathbf k,\mathbf t}\ket{g^{-1},g^{-1}\mathbf k}\!\bra{g^{-1},g^{-1}\mathbf k}, 
\end{align}
which coincides exactly with Eq.~\eqref{eq: dephased state in Y}. Hence
\begin{align}    \Delta\!\big(\rho_Y^{(R_j)}\big)=P_{X\to Y}\,\Delta\!\big(\rho_X^{(R_i)}\big)\,P_{X\to Y}^\dagger.
\end{align}
\qed

\section{Proofs Of Theorem \ref{thm:pairwise-bound} and Maximal Entropy Difference} \label{app:proof-pairwise-bound}

Consider $X=R_{\bar i} \subset SR_{\bar i}$. From Theorem \ref{thm:F1-strong}, we have $S(\Delta \rho_{R_{\bar i}}^{(R_i)})=S(\Delta \rho_{R_{\bar j}}^{(R_j)})$ for all ideal frames $R_i,\ R_j$. Using the decomposition
\begin{align}
    S(\Delta\rho^{(R_i)}_{R_{\bar i}}) = S(\rho^{(R_i)}_S)+C(\rho^{(R_i)}_{R_{\bar i}}),
\end{align}
(because $\rho^{(R_i)}_{SR_{\bar i}}$ is pure), we get
\begin{align}
    S(\rho^{(R_i)}_S)+C(\rho^{(R_i)}_{R_{\bar i}}) = S(\rho^{(R_j)}_S)+C(\rho^{(R_j)}_{R_{\bar j}}).
\end{align}
Therefore,
\begin{align}
    S(\rho^{(R_i)}_S)-S(\rho^{(R_j)}_S)=C(\rho^{(R_j)}_{R_{\bar j}})-C(\rho^{(R_i)}_{R_{\bar i}}),
\end{align}
and hence
\begin{align}
\Delta S^{(R_i,R_j)}(S)=\big| C(\rho^{(R_j)}_{R_{\bar j}})- C(\rho^{(R_i)}_{R_{\bar i}})\big|.
\end{align}

Further, let us show that 
\begin{align} \label{eq:coherence-term-decomp}
    C(\sigma^{(R_j)}) =\sum_{k\neq j} C\!\big(\rho^{(R_j)}_{R_k}\big) +\Big[I(R_{\bar j})_{\sigma^{(R_j)}}-I(R_{\bar j})_{\Delta\sigma^{(R_j)}}\Big],
\end{align}
where $I(R_{\bar j})_{\sigma}:=\sum_{k\neq j} S(\sigma_{R_k})-S(\sigma_{R_{\bar j}})$ and we denote $\sigma^{(R_j)}:=\rho^{(R_j)}_{R_{\bar j}}$.

We can see that this holds as follows. Write
\begin{align} \label{eq: coherence decomp}
    \sum_{k\neq j} C(\sigma_{R_k})=\sum_{k\neq j}\Big(S(\Delta\sigma_{R_k})-S(\sigma_{R_k})\Big).
\end{align}

Since total dephasing is local, we have $(\Delta\sigma)_{R_k}=\Delta(\sigma_{R_k}),$ so $S(\Delta\sigma_{R_k})=S((\Delta\sigma)_{R_k}).$ Hence, 
\begin{align}
    \sum_{k\neq j} C(\sigma_{R_k})=\sum_{k\neq j} S((\Delta\sigma)_{R_k})-\sum_{k\neq j} S(\sigma_{R_k}).
\end{align}

Now use the definition of multipartite total correlation twice:
\begin{align}
    I(R_{\bar j})_{\sigma}=\sum_{k\neq j} S(\sigma_{R_k})-S(\sigma), \qquad I(R_{\bar j})_{\Delta\sigma}=\sum_{k\neq j} S((\Delta\sigma)_{R_k})-S(\Delta\sigma).
\end{align}

Rearranging these two equations and substituting them into Eq.~\eqref{eq: coherence decomp} gives
\begin{align}
    \sum_{k\neq j} C(\sigma_{R_k}) =\big(I(R_{\bar j})_{\Delta\sigma}+S(\Delta\sigma)\big)-\big(I(R_{\bar j})_{\sigma}+S(\sigma)\big).
\end{align}

Using $C(\sigma)=S(\Delta\sigma)-S(\sigma)$, this becomes
\begin{align}
    \sum_{k\neq j} C(\sigma_{R_k})=C(\sigma)-\big(I(R_{\bar j})_{\sigma}-I(R_{\bar j})_{\Delta\sigma}\big),
\end{align}
hence
\begin{align}
C(\sigma)=\sum_{k\neq j} C(\sigma_{R_k})+\Big(I(R_{\bar j})_{\sigma}-I(R_{\bar j})_{\Delta\sigma}\Big).
\end{align}
We have thus shown Eq.~\eqref{eq:coherence-term-decomp}. Then, using the exact tradeoff identity
$\Delta S^{(R_i,R_j)}(S) =\big|C(\rho^{(R_j)}_{R_{\bar j}})-C(\rho^{(R_i)}_{R_{\bar i}})\big|$, the claim of the theorem follows:
\begin{align}
\Delta S^{(R_i,R_j)}(S)=\Bigg|\sum_{k\neq j} C\!\big(\rho^{(R_j)}_{R_k}\big)-\sum_{k\neq i} C\!\big(\rho^{(R_i)}_{R_k}\big)+\Gamma^{(R_j)}-\Gamma^{(R_i)}\Bigg|,
\end{align}
with $\Gamma^{(R_j)}:=I(R_{\bar j})_{\rho^{(R_j)}_{R_{\bar j}}} -I(R_{\bar j})_{\Delta\rho^{(R_j)}_{R_{\bar j}}}. $
\qed

Next, let us prove the relation
\begin{align}
    \Delta S_{\mathrm{max}}(S)\;:=&\max_i S(\rho^{(R_i)}_S)-\min_j S(\rho^{(R_j)}_S)\; \\
=& \max_i C(\rho^{(R_i)}_{R_{\bar i}})-\min_j C(\rho^{(R_j)}_{R_{\bar j}}).
\end{align}

From Theorem \ref{thm:F1-strong}, it follows that for all ideal frames $R_i$, there exists a single scalar $K$ such that
\begin{align}
    S(\rho_S^{(R_i)})=K-C(\rho_{R_{\bar i}}^{(R_i)})\ \forall i.
\end{align}
Then, maximizing the entropy over frames is equivalent to minimizing the coherence:
\begin{align}
    \max_i S(\rho_S^{(R_i)}) &= K - \min_i C(\rho_{R_{\bar i}}^{(R_i)}), \\
    \min_i S(\rho_S^{(R_i)}) &= K - \max_i C(\rho_{R_{\bar i}}^{(R_i)}).
\end{align}
And thus it follows that 
\begin{align}
    \Delta S_{\mathrm{max}}(S) = \max_i C(\rho^{(R_i)}_{R_{\bar i}})-\min_j C(\rho^{(R_j)}_{R_{\bar j}}).
\end{align}\qed

\section{Proof of Theorem \ref{thm:non-ideal-bound}} \label{app: proof-non-ideal-bound}
We bound the entropy difference $\Delta S_\alpha(S)$ by bounding the support dimension of the reduced relational states $\rho_{R_2}^{(R_1)}$, $\rho_{R_1}^{(R_2)}$. Since $S_\alpha(\rho)\le \log \operatorname{rank}\rho$, it suffices to derive a state-independent upper bound on $\operatorname{rank}\rho_{R_j}^{(R_i)}$.

We first consider $\rho_{R_2}^{(R_1)}$. To determine its maximal support dimension, we decompose the complement of $R_2$, namely $R_1S$:
\begin{align}
\mathcal H_{R_1}\otimes \mathcal H_S\cong\bigoplus_q \mathbb C^{\,n_q^{(1S)}}\otimes \mathcal N_q,
\end{align}
with $n_q^{(1S)}=\sum_{a,c} m_a^{R_1}\,m_c^{S}\,N_{ac}^{\,q}$ and $N_{ac}^{\,q}$ the Clebsch-Gordan coefficients.

Now consider a fixed irrep sector $b$ of $R_2$. For a physical invariant state, this sector can only couple to sector $\bar b$ on $R_1S$. The maximal Schmidt rank across the multiplicity spaces in sector $b$ is therefore $\min\!\bigl(m_b^{R_2},\,n_{\bar b}^{(1S)}\bigr)$. Since each such contribution carries an irrep factor of dimension $d_b$, the maximal contribution of sector $b$ to $\operatorname{rank}\rho_{R_2}^{(R_1)}$ is $d_b\,\min\!\bigl(m_b^{R_2},\,n_{\bar b}^{(1S)}\bigr)$.

Summing over all sectors gives the effective dimension of $\rho_{R_2}^{(R_1)}$,
\begin{align}
d_{\mathrm{eff}}(R_2\,|\,R_1)=\sum_b d_b\,\min\!\Bigl(m_b^{R_2},\sum_{a,c} m_a^{R_1}m_c^{S}N_{ac}^{\,\bar b}\Bigr).
\end{align}
Analogously, we find the effective dimension of $\rho_{R_1}^{(R_2)}$,
\begin{align}
d_{\mathrm{eff}}(R_1\,|\,R_2)=\sum_b d_b\,\min\!\Bigl(m_b^{R_1},\sum_{a,c} m_a^{R_2}m_c^{S}N_{ac}^{\,\bar b}\Bigr).
\end{align}

Since $\mathrm{rank}\ \rho_{R_2}^{(R_1)} \leq d_{\mathrm{eff}}(R_2|R_1)$ and $\mathrm{rank}\ \rho_{R_1}^{(R_2)} \leq d_{\mathrm{eff}}(R_1|R_2)$,
as argued in the main text, and $S_\alpha(\rho)\le \log \operatorname{rank}\rho$, the Rényi-$\alpha$ entropies assigned to $S$ from the perspectives of $R_1$ and $R_2$ are bounded by $\log d_{\mathrm{eff}}(R_2|R_1)$ and $\log d_{\mathrm{eff}}(R_1|R_2)$, respectively. Hence,
\begin{align}
    \Delta S_\alpha(S)\le \log \max\!\bigl\{d_{\mathrm{eff}}(R_1\,|\,R_2),\,d_{\mathrm{eff}}(R_2\,|\,R_1)\bigr\}.
\end{align}
\qed
\end{document}